\title{A New Look at the Arcsine Law and ``Quantum-Classical Correspondence''}
\author{Hayato Saigo\footnote{E-mail: h\_saigoh@nagahama-i-bio.ac.jp } 
\\ Nagahama Institute of Bio-Science and Technology \\  Nagahama 526-0829, Japan
}
\date{}
\numberwithin{equation}{section}
\theoremstyle{plain}
\newtheorem{thm}{Theorem}[section]
\theoremstyle{definition}
\newtheorem{rem}[thm]{Remark}
\newtheorem{df}[thm]{Definition}
\newtheorem{nota}[thm]{Notation}
\begin{document}
\maketitle

\begin{abstract}
We prove 
that the arcsine law as the time-averaged distribution for classical harmonic oscillators 
emerges from the distributions for quantum harmonic oscillators in terms of noncommutative algebraic probability.
This is nothing but a simple and rigorous realization of 
``Quantum-Classical Correspondence'' for harmonic oscillators. 
\end{abstract}

\section{Introduction}
The normalized arcsine law $\mu_{As}$ is the probability distribution on $\mathbb{R}$ with support $[-\sqrt{2}, \sqrt{2}]$ defined as 
\[
\mu_{As} (dx)=\frac{1}{\pi} \frac{dx}{\sqrt{2-x^2}},
\]
whose $n$-th moment $M_n:=\int_{\mathbb{R}}x^n \mu_{As}(dx)$ is given by
\[
M_{2m+1}=0,\:\:\:M_{2m}=\frac{1}{2^m}\binom{2m}{m}.
\]
In this case the moment problem is determinate, that is, the moment sequence $\{M_n\}$ characterizes $\mu_{As}$.
 
The distribution $\mu_{As}$ often appears in classical
 probability theory. In the noncommutative context, it is 
 also known as the limit distribution for ``monotone central limit theorem''(\cite{Mur}, a simple 
 proof is found in \cite{Saigo}). 
Here we discuss another aspect of this distribution: 
the relationship with the classical harmonic oscillator. 

Let $x(t)=A\sin t$ be a classical harmonic oscillator with the amplitude $A$. Then it is easy to see that 
the time-averaged distribution $\mu $ of position $x$  has the form
\[
\mu(dx)=C\frac{dx}{\sqrt{A^2-x^2}}
\]
where $C$ denotes the normalizing constant. In $A=\sqrt{2}$ case, $\mu =\mu_{As}$.

Then a question arises: Is it possible to see whether and in what meaning the ``Quantum-Classical Correspondence'' 
holds for harmonic oscillators? This question, 
which is related to fundamental problems in Quantum theory and asymptotic analysis \cite{EZA}, 
is analyzed and generalized from the viewpoint of noncommutative algebraic probability 
with quite a simple combinatorial argument. 

\section{Basic notions}

Let $\mathcal{A}$ be a $\ast$-algebra. We call a linear map $\varphi :\mathcal{A}\rightarrow \mathbb{C}$ a state on $\mathcal{A}$ if it satisfies
\[
\varphi(1)=1,\:\:\:\varphi (a^{\ast}a)\geq 0.
\]
A pair $(\mathcal{A}, \varphi)$ of a $\ast$-algebra and a state on it is called an algebraic probability space.
Here we adopt a notation for a state $\varphi :\mathcal{A}\rightarrow \mathbb{C}$, an element $X\in \mathcal{A}$ and a probability distribution $\mu$ on $\mathbb{R}$.
\begin{nota} We use the notation 
$X\sim_{\varphi} \mu $
when $\varphi(X^m)=\int_{\mathbb{R}}x^m\mu (dx)\:\:\: $for all $m \in \mathbb{N}$.
\end{nota}
\begin{rem}
Existence of $\mu$ for $X$ which satisfies $X\sim_{\varphi} \mu$ always holds. The uniqueness of such $\mu$ holds if the moment problem is determinate.
\end{rem}

\begin{df}[Quantum harmonic oscillator] A quantum harmonic oscillator is a triple 
$(\Gamma(\mathbb{C}), a, a^{\ast})$ where $\Gamma(\mathbb{C})$ is a Hilbert space 
$\Gamma(\mathbb{C}):=\oplus^{\infty}_{n=0} \mathbb{C}\Phi _n$ with inner product given by $\langle\Phi_n, \Phi_m\rangle=\delta _{n,m}$, 
and $a, a^{\ast}$ are operators defined as follows:
\[
a\Phi_0 =0, \:\:\:a\Phi_n=\sqrt{n}\Phi_{n-1} (n\geq 1)
\]
\[
a^{\ast}\Phi _n=\sqrt{n+1}\Phi_{n+1} \]
\end{df}

Let $\mathcal{A}$ be the ${\ast}$-algebra generated by $a$, 
and $\varphi_n$ be the state defined as $\varphi(\cdot):=\langle\Phi_n, (\cdot)\Phi_n\rangle$. 
Then $(\mathcal{A}, \varphi_n$ is an algebraic 
probability space. It is well known that 
\[
X:=\frac{1}{\sqrt{2}}(a+a^{\ast})
\]
represents the ``position'' and that 
\[
X\sim _{\varphi_0} \frac{1}{\sqrt{2\pi}}e^{-\frac{1}{2}x^{2}}dx.
\]
That is, in $n=0$ case, the distribution of position is Gaussian. 

On the other hand, the asymptotic behavior of the distributions of position as $n$ tends to infinity is quite nontrivial.

\section{Emergence of the Arcsine law}

\begin{thm}Let $\mu_N$ be a probability distribution on $\mathbb{R}$ such that 
\[
\frac{X}{\sqrt{N}}\sim _{\varphi_N} \mu_N .
\]
Then $\mu_N$ weakly converges to $\mu_{As}$.
\end{thm}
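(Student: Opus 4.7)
The plan is to apply the method of moments. Since $\mu_{As}$ is supported on the compact interval $[-\sqrt{2},\sqrt{2}]$ its moment problem is determinate (as already noted in the introduction), so weak convergence $\mu_N\to\mu_{As}$ will follow once we show
\[
\varphi_N\!\left(\left(\frac{X}{\sqrt{N}}\right)^{\!m}\right) \;\longrightarrow\; M_m \qquad (N\to\infty)
\]
for every $m\in\mathbb{N}$. Substituting $X=(a+a^{\ast})/\sqrt{2}$ and expanding the binomial gives
\[
\varphi_N\!\left(\left(\frac{X}{\sqrt{N}}\right)^{\!m}\right)
\;=\; \frac{1}{(2N)^{m/2}} \sum_{\varepsilon\in\{a,a^{\ast}\}^{m}} \langle\Phi_N,\,\varepsilon_1\varepsilon_2\cdots\varepsilon_m\,\Phi_N\rangle,
\]
so the problem reduces to evaluating a combinatorial sum of matrix elements of monomials in $a$ and $a^{\ast}$.

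Next I would interpret each such matrix element as a weighted lattice walk. Read right-to-left and applied to $\Phi_N$, a letter $a^{\ast}$ raises the index by one with weight $\sqrt{\mathrm{level}+1}$ and a letter $a$ lowers it by one with weight $\sqrt{\mathrm{level}}$; the matrix element is nonzero precisely when the induced path in $\mathbb{Z}_{\geq 0}$ starts at $N$, stays non-negative, and returns to $N$ after $m$ steps. This already forces the number of $a^{\ast}$'s to equal the number of $a$'s, so odd moments vanish identically, matching $M_{2m+1}=0$. For $m=2k$ and for $N\geq k$, the non-negativity constraint is automatically satisfied by every candidate path, so exactly $\binom{2k}{k}$ words contribute, each carrying a product of $2k$ factors of the form $\sqrt{N+c}$ with $c$ an integer satisfying $|c|\leq k$.

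The last step is an asymptotic expansion: since $\sqrt{N+c}=\sqrt{N}\,(1+O(1/N))$ and there are only $2k$ factors, every such product equals $N^{k}+O(N^{k-1})$ uniformly over the $\binom{2k}{k}$ paths, whence
\[
\varphi_N\!\left((a+a^{\ast})^{2k}\right) \;=\; \binom{2k}{k}\,N^{k} + O(N^{k-1}),
\]
and dividing by $(2N)^{k}$ yields the limit $\binom{2k}{k}/2^{k}=M_{2k}$. The main obstacle is this asymptotic bookkeeping, i.e.\ making precise the intuition that the ``hard wall'' at level $0$ becomes invisible as the starting height $N$ diverges, so that the constrained weighted path count collapses to the unrestricted central binomial $\binom{2k}{k}$; everything else is routine from the ladder-operator relations and the moment-convergence theorem.
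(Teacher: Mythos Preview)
Your proposal is correct and follows essentially the same route as the paper's own proof: method of moments, expansion of $(a+a^{\ast})^{2k}$ into $\binom{2k}{k}$ non-vanishing words once $N\ge k$, and the observation that each surviving matrix element equals $N^{k}(1+o(1))$. The only cosmetic difference is that the paper controls each term via the explicit sandwich $N(N-1)\cdots(N-k+1)\le\langle\Phi_N,a^{\lambda_1}\cdots a^{\lambda_{2k}}\Phi_N\rangle\le(N+1)\cdots(N+k)$, obtained by identifying the extremal orderings of the $a$'s and $a^{\ast}$'s, whereas you reach the same conclusion through the equivalent expansion $\sqrt{N+c}=\sqrt{N}\,(1+O(1/N))$.
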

\begin{proof}
We only have to prove moment convergence because it is known that moment convergence implies weak convergence when the moment problem for the limit distribution is determinate.

First we can easily prove that
\[
\varphi_N((\frac{X}{\sqrt{N}})^{2m+1})=\langle\Phi_N,(\frac{a+a^{\ast}}{\sqrt{2N}})^{2m+1}\Phi_N \rangle=0
\]
since $\langle\Phi_N, \Phi_M\rangle=0$ when $N\neq M$. 

To consider the moments of even degrees, we introduce the following notations:
\begin{itemize}
 \item $\Lambda^{2m}:=\{\text{maps from $\{1,2,..., 2m\}$ to $\{1,\ast\}$}\}$, 
 \item $\Lambda^{2m}_{m}:=\{\lambda \in \Lambda^{2m}; |\lambda^{-1}(1)|=|\lambda^{-1}(\ast)|=m\}$.
\end{itemize}

Note that the cardinality $|\Lambda^{2m}_{m}|$ equals to $\binom{2m}{m}$ because the choice of $\lambda$ is equivalent to the choice of $m$ elements which consist the subset $\lambda^{-1}(1)$ from 
$2m$ elements in $\{1,2,..., 2m\}$.

It is clear that for any $\lambda \notin  \Lambda^{2m}_{m}$ 
\[
\langle\Phi_N, a^{\lambda_1}a^{\lambda_2}\cdots a^{\lambda_{2m}}\Phi_N\rangle=0
\]
since $\langle\Phi_N, \Phi_M\rangle=0$ when $N\neq M$.

On the other hand, for any $\lambda \in \Lambda^{2m}_{m}$ the inequality 
\[
N\cdots(N-m+1)\leq \: \langle\Phi_N, a^{\lambda_1}a^{\lambda_2}\cdots a^{\lambda_{2m}}\Phi_N\rangle\: \leq (N+1)\cdots (N+m)
\]
holds when $N$ is sufficiently large, because the minimum is achieved when 
\[
\lambda_i =\left\{
\begin{array}{ll}
1 , &\quad  (1\leq i \leq m)\\
\ast , &\quad  (m+1 \leq i \leq 2m)
\end{array}
\right.
\]
and the maximum is achieved when 
\[
\lambda_i =\left\{
\begin{array}{ll}
\ast , &\quad  (1\leq i \leq m)\\
1 , &\quad  (m+1 \leq i \leq 2m)
\end{array}
\right.
\]
by the definition of $a, a^{\ast}$.

Using the inequality above we have 
\[
\frac{1}{N^m} \langle\Phi_N, a^{\lambda_1}a^{\lambda_2}\cdots a^{\lambda_{2m}}\Phi_N\rangle \:\rightarrow 1 \:\:\:\:(N\rightarrow \infty).
\]
and then 
\[
 \varphi_N((\frac{X}{\sqrt{N}})^{2m})=\langle\Phi_N,(\frac{a+a^{\ast}}{\sqrt{2N}})^{2m}\Phi_N \rangle 
\]
\begin{eqnarray*}
=&\frac{1}{2^m}\sum_{\lambda\in 
\Lambda^{2m}} \frac{1}{N^{m}}\langle\Phi_N, a^{\lambda_1}a^{\lambda_2}\cdots a^{\lambda_{2m}}\Phi_N\rangle & \\
\\
=&\frac{1}{2^m}\sum_{\lambda\in \Lambda^{2m}_{m}} \frac{1}{N^{m}}\langle\Phi_N, a^{\lambda_1}a^{\lambda_2}\cdots a^{\lambda_{2m}}\Phi_N\rangle  & 
\end{eqnarray*}
\[
\rightarrow \frac{1}{2^m}|\Lambda^{2m}_{m}|=\frac{1}{2^m}\binom{2m}{m} \:\:\:\:(N\rightarrow \infty).\\
\]
\end{proof}

\begin{rem}
The theorem above can be extended to the cases for $q$-Fock spaces ($0\leq q\leq 1$), 
which are typical example of ``interacting Fock spaces \cite{A-B}'', if we just replace integer $N$ by 
\[
N_q:=1+q+q^2+\cdots +q^{N-1}.
\]
The proof is quite similar and we omit it here.
\end{rem}

\section{Summary and prospects}
As we have stated, the Arcsine law as the time-averaged distribution for classical harmonic oscillator 
emerges from the distributions for quantum harmonic oscillators. 
This is nothing but a noncommutative probabilistic realization of 
Quantum-Classical Correspondence for harmonic oscillators.
 The ``time averaged'' nature is deeply related to the notion of Bohr's 
``complementarity''for energy and time. Starting from energy eigenstates, 
one cannot obtain the classical harmonic oscillator itself but time averaged distribution of it. 

Mathematically, the result above also shows an important aspect of the Arcsine law 
as the universal distribution for 
many kinds of ``interacting Fock spaces''\cite{A-B} (not only $q$-Fock spaces) which are deeply connected to the theory of orthogonal polynomials 
(This point of view is due to Professor Bo\.{z}ejko). 
The condition for interacting Fock spaces (orthogonal polynomials) 
from which the Arcsine law emerges as the 
high-energy limit distribution should be discovered.

\section*{Acknowledgments} 
The author would like to thank Prof. Marek Bo\.{z}ejko and the referee for interests and comments. 
He is greatly indebted to Prof. Izumi Ojima and Mr. Kazuya Okamura  
for discussions on Quantum-Classical Correspondence.

\end{document}